\theoremstyle{plain}
\newtheorem{thm}{Theorem}[section]
\newtheorem*{thm*}{Theorem}
\theoremstyle{definition}
\newtheorem*{defn*}{Definition}
\begin{document}

\author{ Johanna Becker\thanks{MTA-ELTE Egerv\'ary Research Group
    (EGRES), Institute of Mathematics, E\"otv\"os University,
    Budapest, P\'azm\'any P.~s.~1/C, Hungary H-1117. Research is
    supported by France Telecom R \& D, by OTKA grants K60802,
    TS049788 and by European MCRTN Adonet, Contract Grant No.~504438.
    e-mail: {\tt \{beckerjc, csisza, jacint, szego\}@cs.elte.hu}} ,
    Zsolt Csizmadia$^*$,\\ Alexandre Laugier\thanks{France Telecom
    Research \& Development / BIZZ / DIAM.  e-mail: {\tt
    Alexandre.Laugier@francetelecom.com}}, J\'acint Szab\'o$^*$,
    L\'aszl\'o Szeg\H{o}$^*$}

\title{Balancing congestion for unsplittable routing on a bidirected ring}

\maketitle

\begin{abstract}
  Given a bidirected ring with capacities and a demand graph, we
  present an approximation algorithm to the problem of finding the
  minimum $\alpha$ such that there exists a feasible unsplittable
  routing of the demands after multiplying each capacity by
  $\alpha$. We also give an approximation scheme to the problem.
\end{abstract}

\section{Introduction}
In this paper a bidirected ring means the union of two oppositely
directed circuits on the same set of nodes.  The motivation of the
present paper is to compute a routing of demands in an SDH
(Synchronous Digital Hierarchy) based network. Such a network is a
transmission network and we will refer to it as the supply graph.
Also, we will refer to the set of demands as the demand graph. The
backbone networks of the European telecommunication companies are
based on the SDH technology. The ring architecture plays a key role in
this technology, because it provides an efficient self-healing
mechanism in case of failure: although the rings are bidirected, only
one direction is used, while the other would be useful in case of a
failure. Most of the SDH backbone networks consist of such
uni-directed circuits. From such uni-directed circuits different
topologies have been built, like tree of circuits or cycle of
circuits. In the next section we will show that the routing in a
supply graph which is a cycle of circuits can be reduced to a routing
problem in a bidirected ring.  One can find more about the SDH
transmission protocol in \cite{tanenbaum}.

Formally, we are given a directed supply graph $G$ with capacity
function $c\colon E(G)\rightarrow \mathbb{R}_+$ and a directed demand
graph $H$ on the same node set with demand values $d\colon
E(H)\rightarrow \mathbb{R}_+$.  A \textbf{routing} of $H$ is a
collection of $uv$-paths of value $k$, one for each demand edge of
$E(H)$ joining $u$ to $v$ with value $k$, satisfying the capacity
constraint $c$. Since each demand is routed along a unique path we
will speak about \textbf{unsplittable} routings. In this paper a
routing is always unsplittable, unless when speaking about the
fractional solution of the linear relaxation of the problem.  We say
that a routing of $H$ has \textbf{load at most (less than) $c$} if for
each edge $e\in E(G)$ the sum of the values of the paths using $e$ is
at most (less than) $c(e)$.

In this paper we consider the routing problem in bidirected rings.
Given a bidirected ring $G$ with capacity function
$c\,\colon\,E(G)\rightarrow \mathbb{R}_+$, and a directed demand graph
$H$ with demand values $d\colon E(H)\rightarrow \mathbb{R}_+$, we want
to find the minimum $\alpha$ for which there exists a routing of $H$
with load at most $\alpha c$. This is called the \textsc{balanced
bidirected ring routing problem}. Under balancedness we mean that the
value of $\alpha$ is introduced to ensure that the remaining
capacities are as large as possible, allowing the network to carry
larger demands. The case when $c$ is uniform (that is, all capacities
are equal) and each demand is 1 was solved with an elegant method by
Wilfong and Winkler \cite{ww}.  In Section \ref{approx} we follow the
lines of their proof to give an algorithm which returns a routing with
load less than $\alpha_{opt} c+\frac{3}{2} D$ where $\alpha_{opt}$ is
the optimum solution of the problem and $D$ is the maximum value of
the demands.  We also show that this error term can indeed occur.

In Section \ref{khanna}, using a method of Khanna \cite{khanna}, we
show an approximation scheme to the problem which for any $\varepsilon
> 0$ yields a routing with load less than $\alpha_{opt}(c+\varepsilon
\frac{\sum c}{n})$, with a trade-off with the running time.  Here
$n=|V(G)|$ and $\sum c$ is the sum of the capacities of the edges.

By solving a linear program, a fractional solution of the
\textsc{balanced bidirected ring routing problem} can be given, that
is, we can calculate the minimum $\alpha$, denoted by $\alpha^*$, such
that there exists a fractional routing of $H$ with load at most
$\alpha c$. Clearly $\alpha^*\leq \alpha_{opt}$. Our approximation
algorithm for the \textsc{balanced bidirected ring routing problem}
gives a routing such that its load is actually less than $\alpha^*
c+\frac{3}{2}D$. If $\alpha^*<1$ then our solution may be considered
good in practice since it requires less than $\frac{3}{2}D$ additional
capacity on each edge. On the other hand, if $\alpha^*\geq 1$ then
there exists no feasible routing with the given capacities. Hence we
are facing a network design problem, and our goal is to increase the
capacity function with minimum cost, in order to have a routing
satisfying the increased capacity.  We show in Section \ref{design}
that even this problem can be handled with techniques similar to
Section \ref{approx}.

If the capacity function $c$ is uniform and the ring is
\emph{undirected}, the balanced ring routing problem was studied by
many researchers due to its significance in telecommunication
networks. This problem is called the \textsc{undirected ring loading
problem}, first considered by Cosares and Saniee
\cite{cosares}. 
Schrijver,
Seymour, and Winkler \cite{seysch} gave a combinatorial approximation
algorithm for the undirected case, if $c$ is not necessarily
uniform. Their algorithm returns a routing requiring less than
$\frac{3}{2}D$ more capacities on each edge than in an optimum
solution. 
The \textsc{balanced bidirected ring
  routing problem} was first considered by Wilfong and Winkler \cite{ww} who
gave an exact algorithm for finding an optimum routing in a bidirected ring,
in case $c$ is uniform and each demand is 1. Our considerations in Section
\ref{approx} are based on their method, yielding a generalization of their
result. Our result are more general than theirs in the aspect that in our case
the demands are not restricted to be 1, and the capacity function is not
necessarily uniform.

We point out that the \textsc{balanced bidirected ring routing
  problem} is NP-complete. Indeed, the \textsc{partition problem} can
be reduced to it in a straightforward way, just as in the undirected
case (Cosares, Saniee \cite{cosares}).  Moreover, contrary to the
undirected case, the cut condition is not sufficient for the existence
of a fractional solution. That may be the reason that no combinatorial
algorithm is known finding an optimum fractional solution of the
bidirected ring routing problem, unlike in the undirected case.

\section{Motivation}
In some SDH backbone networks the supply graph $G$ forms a cycle of
directed circuits, with capacity function $c: E(G)\rightarrow
\mathbb{R}_+$. In this section we show that the routing in $G$ can be
reduced to a routing problem in a bidirected ring.  Two neighboring
circuits of $G$ meet in a node called a \textbf{common node}. The common
nodes divide each directed circuit into two edge sets, called
\textbf{arcs}. See~Figure \ref{red}.

\begin{figure}
\begin{center}
  \includegraphics[scale=.3]{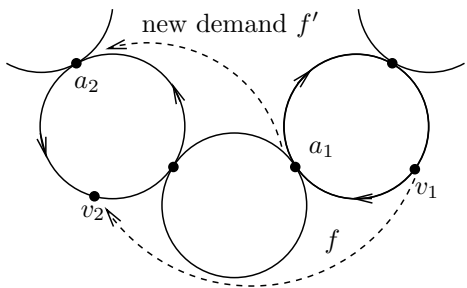}
\end{center}
\caption{Reduction to a bidirected ring}\label{red}
\end{figure}
    
Let $f\in E(H)$ be a demand edge with value $d$ joining $v_1$ to
$v_2$.  Let $a_1$ be the common node $a$ of $G$ minimizing
$dist_G(v_1,a)$. Similarly, let $a_2$ be the common node $a$
minimizing $dist_G(a,v_2)$. If $v_2$ is contained in the $v_1a_1$-path
of $G$ then only one $v_1v_2$-path exists in $G$ so simply delete $f$
from $H$ and decrease the capacity of this path by $d$. Otherwise all
$v_1v_2$-paths use the supply edges on the $v_1a_1$- and on the
$a_2v_2$-paths of $G$. So replace $f$ by a new demand edge $f'$
joining $a_1$ to $a_2$ with value $d$ and decrease the capacities of
the supply edges on the $v_1a_1$ and on the $a_2v_2$-paths by $d$.
After doing so for each demand edge, the new demand graph $H'$ has a
routing in $G$ with load at most the new capacity function $c'$ if and
only if $H$ has a routing with load at most $c$. Moreover, the
non-common nodes of $G$ are incident to no edge of $H'$, so we can
think of an arc $A$ as only one edge with capacity the minimum of
$c'(e)$ taken over all edges $e\in A$. The new supply graph is a
bidirected ring with $n$ nodes and $2n$ edges, where $n$ is the number
of circuits in $G$.

\section{An approximation algorithm}\label{approx}
In this section we show an algorithm yielding a routing in a
bidirected ring with load less than $\alpha_{opt} c+\frac{3}{2}D$
where $\alpha_{opt}$ is the solution of the \textsc{balanced
bidirected ring routing problem} and $D$ is the maximum value of the
demands.  Our algorithm is a modification of that of Wilfong and
Winkler \cite{ww} who solved the \textsc{balanced bidirected ring
routing problem} in the case if $c$ is uniform and the demands are
$1$. We do not count running times since solving a linear program is
included.

From the two directions of the ring we say that one is the
\textbf{forward} and the other one is the \textbf{backward} direction.
Accordingly, an edge $e\in E(G)$ can be forward or backward, and from
the two possible $uv$-paths ($u,v\in V(G)$) one is the forward and the
other one is the backward path. For an edge $f\in E(H)$ joining $u$ to
$v$, an \textbf{$f$-path} refers to any of the two $uv$-paths of $G$.
The edge sets of these two paths will be denoted by $F(f)$ and $B(f)$,
respectively.

The first step is to solve the LP-relaxation of the problem. There are
two possibilities of routing demand $f\in E(H)$ hence we introduce a
variable $0\leq \varphi(f)\leq 1$ with the meaning that $\varphi(f)$
fraction of the demand $f$ is routed forward and $1-\varphi(f)$
fraction is routed backward. So we have the following LP-relaxation,
whose optimum solution is denoted by $\alpha^*$. Both sums run on
demand edges $f\in E(H)$.

\[\min\,\, \alpha, \mbox{ s.t.}\]
\begin{equation}\label{LP}
0\leq \varphi\leq 1 
\end{equation}
\[\sum_{f \,:\, e\in F(f)} \varphi(f)\, d(f)+\sum_{f \,:\, e\in B(f)} (1-\varphi(f))\,
d(f)\leq \alpha c(e)\,\,\,\,\,\,\,\,\forall e\in E(G).\]

Note that we would get $\alpha_{opt}$ if $\varphi$ was required to be
integer. Now we manipulate the demands $f\in E(H)$ with
$0<\varphi(f)<1$.  Such demands are called \textbf{split}.  We say that
demands $f_1,\,f_2\in E(H)$, where $f_i$ joins $s_i$ to $t_i$ for
$i=1,2$, are \textbf{parallel} if the end nodes are placed in the ring
in the order $s_1,\,t_1,\,t_2,\,s_2$ (some of these nodes may
coincide).  Assume that $f_1,\, f_2\in E(H)$ are parallel split
demands. Call the $f_i$-path containing both $s_{3-i}$ and $t_{3-i}$
the \textbf{long $f_i$-path} for $i=1,2$. Let $x_i$ denote the amount of
flow of $f_i$ sent along the long $f_i$-path in our fractional
solution.  If we reroute $\min(x_1,x_2)$ amount of flow from the long
$f_i$-path to the other $f_i$-path for $i=1,2$ then one of the demands
$f_1,\,f_2$ will not be split any more, moreover, we do not increase
the load of any edge of $G$ (we may even decrease it somewhere). So at
most $|E(H)|$ such \textbf{uncrossing steps} are possible and finally we
get a fractional solution where there are no pair of parallel split
demands.  Especially, demands with the same source node are
parallel. So it will hold that for each $s\in V(G)$ there exists at
most one split demand $f\in E(H)$ with source $s$.

Denote the nodes of $G$ by $s_1,\,\ldots,\,s_n$ in the forward order.
Now we try to unsplit the remaining split demands. Let $f_i$ be the
split demand with source $s_i$ (if any). Assume that $x_i$ fraction of
$f_i$ is routed forward and $y_i$ fraction backward.  Let $w_i=y_i$ if
we would set $\varphi(f_i)$ to 1 and $w_i=-x_i$ if $\varphi(f_i)$
would be set to 0.  If we round $\varphi(f_i)$ to 0 or 1 then the load
of an edge $e\in E(G)$ increases by $w_i$ if $e$ is contained in the
forward $f_i$-path, it decreases by $w_i$ if $e$ is contained in the
backward $f_i$-path, and it does not change elsewhere. There are no
two parallel split demands so the change of the load of an edge is
$\pm \sum_{j\leq i\leq k}w_i$ for some $j\leq k$ where $k$ may be
greater than $n$ but then the indices of $w$ are meant modulo $n$.
Here the sign depends on whether $e$ is a forward or a backward edge.
Now we try to set the value of $w_i$ to $y_i$ or $-x_i$ for all $i$ in
such a way that $|\sum_{j\leq i\leq k}w_i|<\frac{3}{2}D$ holds for all
$j\leq k$.  To achieve this it is clearly sufficient that
\begin{equation}\label{D2}
-D/2<\sum_{1\leq i\leq k}w_i\leq D/2 \,\, \mbox{ holds for all } k\leq n.
\end{equation}
(\ref{D2}) can be easily achieved by greedily setting the values $w_i$
to $y_i$ or $-x_i$ one after another, since $x_i+y_i\leq D$ holds.
Hence by this procedure each load is increased by \emph{less than}
$\frac{3}{2}D$.

\bigskip So the algorithm for the \textsc{balanced bidirected ring routing
  problem} is the following.

\begin{enumerate}
\item Solve the LP-relaxation of the \textsc{balanced bidirected ring
    routing problem}.
\item Uncross the parallel split demands.
\item Unsplit the remaining split demands in the above described
  greedy way.
\end{enumerate}

\begin{thm}\label{pol}
  The algorithm gives a routing of $H$ with load less than $\alpha^*
  c+\frac{3}{2}D\leq \alpha_{opt}c+\frac{3}{2}D$.
\end{thm}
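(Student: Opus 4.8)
The plan is to read the statement as a bound on how much the load of a single edge $e\in E(G)$ can grow as we pass from the fractional optimum of (\ref{LP}) to the integral routing the algorithm outputs, checking that each of the three steps behaves as the discussion above asserts. Step~1 yields a fractional routing of $H$ of load at most $\alpha^* c$; and since every unsplittable routing is an integral feasible point of (\ref{LP}), we get $\alpha^*\le\alpha_{opt}$ for free. So it suffices to show that Steps~2 and~3 together increase the load of every edge by strictly less than $\tfrac32 D$.

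For Step~2 I would first verify that one uncrossing step never raises any load. Such a step only moves $\min(x_1,x_2)$ units of flow within each of two parallel split demands $f_1,f_2$, from the long $f_i$-path to the short one; a short case analysis on which of the long and short $f_i$-paths the edge $e$ lies on---using that two parallel demands have disjoint short arcs while their long arcs overlap---shows the net change at $e$ is nonpositive. Hence after Step~2 the load is still at most $\alpha^* c$. Because one of $f_1,f_2$ becomes unsplit at each step and no other demand is touched, the number of split demands strictly decreases, so the process halts after at most $|E(H)|$ steps with no two parallel split demands left; since demands sharing a source are parallel, at most one split demand $f_i$ then remains at each node $s_i$.

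For Step~3 I would check that the greedy choice keeps the invariant (\ref{D2}). Write $S_k=\sum_{1\le i\le k}w_i$, with $S_0=0$. If $S_{i-1}\in(-D/2,D/2]$, set $w_i=y_i$ when $S_{i-1}+y_i\le D/2$, so that $-D/2<S_{i-1}<S_i\le D/2$; otherwise set $w_i=-x_i$, so that $S_i<D/2$ and $S_i=(S_{i-1}+y_i)-(x_i+y_i)>D/2-d(f_i)\ge-D/2$, the last inequality being the one place where $x_i+y_i=d(f_i)\le D$ enters. Thus $S_k\in(-D/2,D/2]$ for all $k\le n$. Now fix $e\in E(G)$. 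Since no two split demands are parallel, the split demands $f_i$ with $e\in F(f_i)$, respectively $e\in B(f_i)$, according as $e$ is a forward or a backward edge, occupy a contiguous cyclic range of indices $j,\dots,k$ (indices mod $n$, with $k>n$ permitted but always $k-j<n$), so the load of $e$ changes by exactly $\pm\sum_{j\le i\le k}w_i$. If $j\le k\le n$ this equals $\pm(S_k-S_{j-1})$, of absolute value $<D$; if $k>n$ it equals $\pm\bigl(S_n-S_{j-1}+S_{k-n}\bigr)$, a signed sum of three of the partial sums kept in $(-D/2,D/2]$, hence of absolute value $<\tfrac32 D$. Therefore the final load of $e$ is less than $\alpha^* c(e)+\tfrac32 D\le\alpha_{opt}c(e)+\tfrac32 D$, which is the assertion.

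The step I expect to be the real obstacle is the structural claim used in the last paragraph: that after the uncrossing steps the split demands affecting a fixed edge occupy a single cyclic range of indices, so that the per-edge change is $\pm\sum_{j\le i\le k}w_i$ and not a sum over some scattered index set, together with the bookkeeping of the wrap-around case $k>n$. That wrap-around is precisely why the error term is $\tfrac32 D$ rather than $D$: a range of indices straddling $n$ forces three, not two, of the monitored partial sums into the estimate, so pinning down this subcase is what the whole bound rests on.
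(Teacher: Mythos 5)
Your proposal is correct and follows essentially the same route as the paper: Step 1 gives a fractional routing of load at most $\alpha^* c$ with $\alpha^*\le\alpha_{opt}$, uncrossing never increases any load, and the greedy unsplitting maintains the invariant (\ref{D2}), from which the per-edge change, being $\pm\sum_{j\le i\le k}w_i$ over a cyclic interval (the non-parallel structure the paper also invokes at this point), is bounded in absolute value by $\tfrac32 D$, including the wrap-around case. Your write-up merely fills in the details (the case analysis for uncrossing, the explicit greedy invariant, and the three-partial-sum estimate) that the paper's proof delegates to the preceding discussion.
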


\begin{proof}
After the uncrossing procedure the routing still has load at most
$\alpha^* c$. Moreover, as we observed, the load of each edge $e\in
E(G)$ increases by less than $\frac{3}{2}D$ during unsplitting.
\end{proof}

Our goal was to find a routing with load at most
$\alpha_{opt}c+\frac{3}{2}D$, but actually, similarly to the
undirected case \cite{seysch}, the routing we got has load less than
$\alpha^* c+\frac{3}{2}D\leq \alpha_{opt}c+\frac{3}{2}D$. However,
this fact does not always induce improved efficiency. Indeed, next we
show that for every $\delta>0$ there is an example where the output
routing of the algorithm has load not less than $\alpha_{opt}
c+\frac{3}{2}D - \delta$. Such \emph{tight examples} are very
important in understanding how algorithms can be improved in order to
decrease running time.

\begin{figure}
\begin{center}
  \includegraphics[scale=0.35]{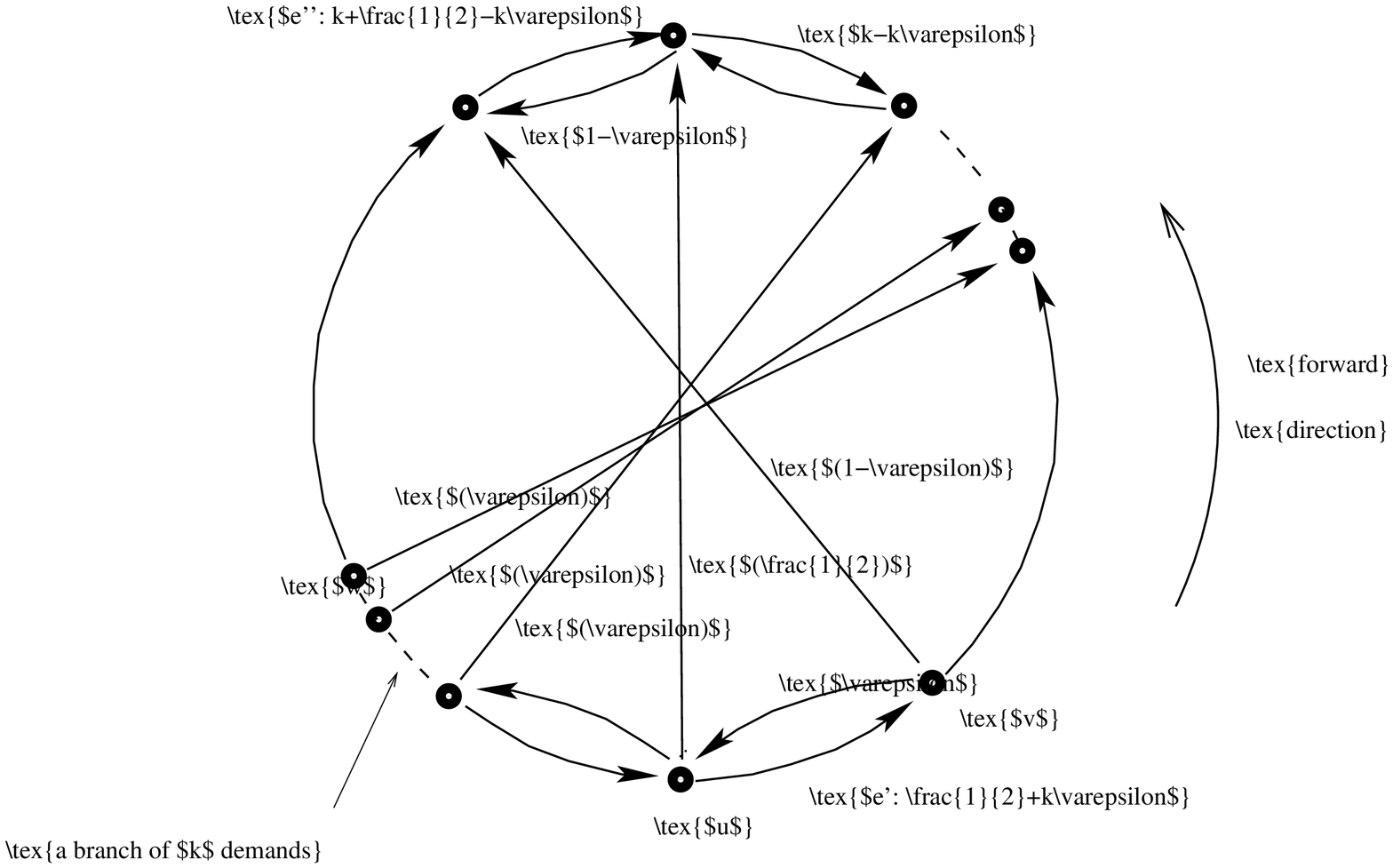}
\end{center}
\caption{An optimum fractional routing}\label{SharpGraph}
\end{figure}

Consider the capacities of the ring as shown in Figure
\ref{SharpGraph}. The edges without capacity, as well as the dashed
edges have large capacities in both directions and $\varepsilon$ is a
number which is small enough with respect to $k$.  All demands are of
value $1$. Observe that the cut $\{e',\,e''\}$ is tight, that is, the
total value of the demands crossing it equals $c(e')+c(e'')$.  It is
easy to see that there is no integer routing with $\alpha=1$ due to
the cut $\{e',\,e''\}$ with fractional capacities.  The best integer
solution is obtained if the capacity of $e''$ is raised to $k+1$, that
is $\alpha_{opt}^{(k)} = \frac{k+1}{k+\frac{1}{2}-k\varepsilon}$.
Indeed, $\varepsilon$ is small enough so we can route the demand of
node $v$ forward while all other demands backward. Thus $\lim_{k\to
\infty} \alpha_{opt}^{(k)}=1$ by an appropriate choice of
$\varepsilon=\varepsilon^{(k)}$.

Clearly $\alpha^*=1$, and an optimum $\varphi$ (which is a vertex of
the polyhedron (\ref{LP})) is shown in brackets in Figure
\ref{SharpGraph}. Assume that the unsplitting procedure starts at node
$u$. The split demand at node $u$ is routed forward.  Now the split
demand at node $v$ is routed backward and then the demand of node $w$
forward. All remaining demands will be routed backward, according to
our greedy heuristic. Now $e'$ has load $2$ and so the error at $e'$
tends to
\[
\lim_{k\to \infty}\frac{2-(\frac{1}{2}+k\varepsilon)\cdot
  \alpha_{opt}^{(k)}}{D} = \frac{3}{2},
\]
by choosing $k\varepsilon \to 0$, and using that $D=1$. Note that the
fractional capacities can be made integer by scaling.


\section{An approximation scheme}\label{khanna}
If the value of the demands are small with respect to the minimum load
of an optimum routing then the $\frac{3}{2}D$ additive error is a
small deviation from the optimum solution. Otherwise this error term
can be significant.  In this section we develop an approximation
scheme for the \textsc{balanced bidirected ring routing problem},
which for any $\varepsilon > 0$ gives an algorithm polynomial in the
number of nodes, finding a routing with load less than
$\alpha_{opt}(c+\varepsilon \overline{c})$ where $n=|V(G)|$, and
$\overline{c}=\frac{\sum_{e\in E(G)}c(e)}{n}$.  We use the solution
method of Khanna \cite{khanna} who presented an
$\varepsilon$-approximation scheme for the case of an undirected ring
with $c$ uniform. By definition, an $\varepsilon$-approximation scheme
for a minimization problem is an algorithm which, for any $\varepsilon
> 0$, returns a solution with value at most $1+\varepsilon$ times the
optimum value. In this sense the method of this section is not an
$\varepsilon$-approximation scheme since it does not approximate
$\alpha_{opt}$. However, for $M>0$, in the class of \textsc{balanced
bidirected ring routing problem} instances, where $\overline{c}\leq
Mc(e)$ holds for all edge $e$, our method is an
$M\varepsilon$-approximation scheme. Many problem instances in
practice arising from an SDH backbone network belong to such a class
with reasonably small $M$, in particular the instances where $c$ is
uniform.

For $s,t\in V(G)$ the \textbf{long $st$-path} is defined to be the
longest $st$-path in $G$, or to be the forward one if the two paths are
of equal length.  Recall that $\alpha ^*$ denotes the fractional
optimum to the \textsc{balanced bidirected ring routing problem}.  Let
$\alpha'\geq \alpha^*$ be any real number and let
\begin{center}
\begin{math}
E'=\left\{f\in E(H)\,:\, d(f) >
  \frac{2}{3}\varepsilon\alpha'\overline{c}\right\}.
\end{math}
\end{center}
Note that 
\begin{center}
\begin{math}
\alpha' \sum_{e\in E(G)} c(e) \geq \alpha^* \sum_{e\in E(G)} c(e) \geq
\sum_{f\in E'}d(f) > |E'|\frac{2}{3}\varepsilon\alpha'\overline{c}.
\end{math}
\end{center}
 Hence $|E'|< \frac{3}{2\varepsilon}n$. If $\alpha'\geq \alpha_{opt}$
 then less than $\frac{3}{\varepsilon}$ edges of $E'$ are routed in
 the long path in any optimum routing, since otherwise the sum of the
 loads would be more than
\begin{center}
\begin{math}
\frac{2}{3}\varepsilon\alpha'\overline{c} \cdot \frac{3}{\varepsilon}
\cdot \frac{n}{2}= n\alpha'\overline{c} =\alpha'\sum_{e\in E(G)} c(e)
\geq \alpha_{opt}\sum_{e\in E(G)} c(e),
\end{math}
\end{center}
which is impossible. Thus, independently of whether $\alpha'\geq
\alpha_{opt}$ or not, for all subsets $E'' \subseteq E'$ with $|E''| <
\frac{3}{\varepsilon}$ we do the following. We route the demands of
$E''$ in the long paths and the demands of $E'- E''$ in the short
paths.  Denote the load of $e\in E(G)$ in this routing of $E'$ by
$l(e)$. Now denote by $\alpha^*_{E''}$ the optimum of the following
linear program, where the sums run on $f\in E(H)- E'$.

\[\min\,\, \alpha, \mbox{ s.t.}\]
\[0\leq \varphi\leq 1\]
\[\sum_{e\in F(f)} \varphi(f)\, d(f)+\sum_{e\in B(f)} (1-\varphi(f))\,
d(f) + l(e)\leq \alpha c(e)\,\,\,\,\,\,\,\,\forall e\in E(G).\]

Note that the maximum value of a demand in $E(H)-E'$ is at most
$\frac{2}{3}\varepsilon\alpha'\overline{c}$. Hence exactly as in the
previous section, we can find a routing of $E(H)-E'$ with load less
than $\alpha^*_{E''} c-l+\varepsilon\alpha'\overline{c}$.  In the case
when $\alpha'\geq \alpha_{opt}$, in any optimum routing of $H$ less
than $\frac{3}{\varepsilon}$ edges of $E'$ are routed in the long
path, hence we get that one of the above routings has load less than
$\alpha_{opt} c+\varepsilon\alpha'\overline{c}$. For any $\alpha'$,
from these $\sum_{i=0}^{3/\varepsilon} \dbinom{|E'|}{i}$ routings of
$E(H)$ choose the one with load at most $\alpha c+\varepsilon \alpha'
\overline{c}$, such that this $\alpha$ is minimum.

The number of subsets $E''\subseteq E'$ to consider is
$\sum_{i=0}^{3/\varepsilon} \dbinom{|E'|}{i}$.  Using that
$\dbinom{k}{l} \leq (ek)^l/l^l$ holds for any integers $k\geq l$, when
$|E'|\geq \frac{6}{\varepsilon}$ we get
\[\sum_{i=0}^{3/\varepsilon} \dbinom{|E'|}{i}  
\leq \frac{3}{\varepsilon}
\left(\frac{e|E'|}{3/\varepsilon}\right)^{\frac{3}{\varepsilon}} <
\frac{3}{\varepsilon}
\left(\frac{en}{2}\right)^{\frac{3}{\varepsilon}}.\] Since the left
hand side is monotone increasing in $|E'|$, the above bound is valid
when $|E'|<\frac{6}{\varepsilon}$, too. Hence the number of subsets
$E''\subseteq E'$ to try is
$O_\varepsilon(n^{\frac{3}{\varepsilon}})$.

Now we show how $\alpha'$ can be chosen. It is clear that
$\alpha^*\leq \alpha_{opt}\leq 2\alpha^*$, since there are exactly two
paths between any two nodes in $G$. So first determine the value of
$\alpha^*$ and then run the above algorithm with
$\alpha'=\alpha_i=\frac{N+i}{N}\alpha^*$ for $i=0,\ldots,N$ for some
integer $N$. As we mentioned, at the point when $\alpha_{opt}\leq
\alpha_j\leq (1+1/N)\alpha_{opt}$ happens to hold, our routing has
load less than $\alpha_{opt}( c+\varepsilon(1+1/N)\overline{c})$.
Thus, finally, from these $N+1$ routings choose the one with load at
most $\alpha(c+\varepsilon(1+1/N) \overline{c})$, such that this
$\alpha$ is minimum. In the beginning we should also replace
$\varepsilon$ by $\varepsilon/(1+1/N)$ in the above scheme.

\section{Network design}\label{design}
The approximation algorithm presented in Section \ref{approx} gives a
routing in a bidirected ring with load less than
$\alpha^*c+\frac{3}{2}D$. Recall that we can calculate the fractional
optimum $\alpha^*$ by solving a linear program. Observe that
$\alpha^*<1$ means that there exists a fractional routing in the
supply network $G$ with the given capacities. Since our approximate
integer solution requires less than $\frac{3}{2}D$ additional capacity
on each edge (hopefully, much less on average), this solution may be
considered good in practice. On the other hand, if $\alpha^*\geq 1$
then there exists no feasible routing with the given capacities. Hence
we are facing a network design problem, where we want to increase the
capacity of some edges of the ring, with minimum cost, in order to
guarantee the existence of a routing satisfying the increased
capacities.

Formally, a \textbf{widening cost} $w_e$ is given on each edge $e\in
E(G)$, measuring the cost of increasing the capacity of edge $e$ by
one unit. We aim to find the minimum of $\sum_{e\in E(G)}\gamma_ew_e$
such that there exists an integer routing with load at most
$(\gamma_e+c(e))(1-\alpha)$ on any edge $e\in E(G)$. Here $\alpha$ is
some robustness factor known a priori. Next we give a heuristic
algorithm for this problem, with similar methods than that of Section
\ref{approx}. First we solve the fractional relaxation of the problem.

\[\min\,\, \sum_{e\in
  E(G)}\gamma_ew_e, \mbox{ s.t.}\]
\begin{equation}
0\leq \varphi\leq 1 
\end{equation}
\[\sum_{e\in F(f)} \varphi(f)\, d(f)+\sum_{e\in B(f)} (1-\varphi(f))\,
d(f)\leq (\gamma_e+c(e))(1-\alpha)\,\,\,\,\,\,\,\,\forall e\in E(G).\]

Next uncross the split demands, and then unsplit the remaining split
demands, just as in Section \ref{approx}. Exactly as in Section
\ref{approx}, we get an integer routing of $H$ with load less than
$(\gamma^*_e+c(e))(1-\alpha)+\frac{3}{2}D$, with the optimum
$\gamma_e^*$.

Hence we have a routing with widening cost which is at most
$\frac{3D}{2}\sum_{e\in E(G)} w_e$ more expensive than the optimum
cost. We can also choose $\alpha$ to ensure that the load of this
routing is at most $\gamma_e^*+c(e)$, yielding a feasible unsplit
routing in the new, increased network.

We mention that it is also possible to apply the method of Section
\ref{khanna} to yield for any $\varepsilon > 0$ an algorithm
polynomial in $n$, finding a routing with widening cost at most
$\varepsilon\frac{C^*}{n} \sum_{e\in E(G)} w_e$ more expensive than
the optimum cost, where $C^*$ is the minimum sum of loads of
fractional routings of $H$ in the uncapacitated ring $G$ (one should
simply define $E'=\left\{f\in E(H)\,:\, d(f) > \frac{2}{3}\varepsilon
  \frac{C'}{n}\right\},$ where $C^* \leq C' \leq 2C^*$).

\end{document}